\documentclass[twocolumn,amsthm]{autart}
\usepackage{graphicx}
\usepackage[utf8]{inputenc}
\usepackage{amsmath}
\usepackage{amsfonts}
\usepackage{mathtools}
\usepackage{amssymb}
\usepackage{comment}
\usepackage{color}
\usepackage{upgreek}
\usepackage{float}
\usepackage{xcolor}
\usepackage{soul}
\usepackage{hyperref}
\usepackage{tikz-cd}
\usepackage{balance}
\usepackage[inline]{enumitem}

\mathtoolsset{showonlyrefs}

\theoremstyle{plain}

\newtheorem{proposition}{Proposition}[section]
\newtheorem{corollary}{Corollary}[section]
\theoremstyle{definition}
\newtheorem{lemma}{Lemma}[section]
\newtheorem{assumption}{Assumption}

\newtheorem{remark}{Remark}

\def\R{\mathbb{R}}
\def\N{\mathbb{N}}

\usepackage[normalem]{ulem}

\parskip 1.26ex


\def\JPC{{\it J. Process Control}}

\def\IJACSP{{\it Int. J. Adapt. Control Signal Process.}}
\def\IJRNLC{{\it Int. J. Robust Nonlinear Control}}
\def\TAC{{\it IEEE Trans. Autom. Control}}
\def\TIE{{\it IEEE Trans. Ind. Electron.}}

\def\EJC{{\it Eur. J. Control}}

\def\SCL{{\it Syst. Control Lett.}}
\def\AUT{{\it Automatica}}
\def\CST{{\it IEEE Trans. Control Syst. Technol.}}

\def\CSM{{\it IEEE Control Syst. Mag.}}

\def\TPE{{\it IEEE Trans. Power Electron.}}
\def\CSL{{\it IEEE Control Syst. Lett.}}

\def\lab{\label}
\def\liminf{\lim_{t \to \infty}}

\def\begequ{\begin{equation}}
\def\endequ{\end{equation}}


\def\begmat#1{\begin{bmatrix}#1\end{bmatrix}}
\def\begali#1{\begin{align}{#1}\end{align}}
\def\begalis#1{\begin{align*}{#1}\end{align*}}


\def\begcen{\begin{center}}
\def\endcen{\end{center}}







\def\liminf{\lim_{k \to \infty}}

\def\L2{{\cal L}_2}
\def\L2e{{\cal L}_{2e}}


\def\begmat#1{\begin{bmatrix}#1\end{bmatrix}}
\def\begali#1{\begin{align}{#1}\end{align}}
\def\begalis#1{\begin{align*}{#1}\end{align*}}

\def\begequarr{\begin{eqnarray}}
\def\endequarr{\end{eqnarray}}
\def\begequarrs{\begin{eqnarray*}}
\def\endequarrs{\end{eqnarray*}}
\def\begarr{\begin{array}}
\def\endarr{\end{array}}
\def\begequ{\begin{equation}}
\def\endequ{\end{equation}}
\def\lab{\label}
\def\begdes{\begin{description}}
\def\enddes{\end{description}}
\def\begenu{\begin{enumerate}}
\def\begite{\begin{itemize}}
\def\endite{\end{itemize}}
\def\endenu{\end{enumerate}}

\def\lef[{\left[\begin{array}}
\def\rig]{\end{array}\right]}

\def\begcen{\begin{center}}
\def\endcen{\end{center}}
\def\begrem{\begin{remark}\rm}
\def\endrem{\end{remark}}
\def\begassum{\begin{assumption}}
\def\endassum{\end{assumption}}
\def\begassums{\begin{assumption*}}
\def\endassums{\end{assumption*}}
\def\begassu{\begin{ass}}
\def\endassu{\end{ass}}
\def\beglem{\begin{lemma}}
\def\endlem{\end{lemma}}
\def\begcor{\begin{corollary}}
\def\endcor{\end{corollary}}
\def\begfac{\begin{fact}}
\def\endfac{\end{fact}}



\def\liminft{\lim_{t \to \infty}}

\def\L2e{{\cal L}_{2e}}


\def\begsubequ{\begin{subequations}}
	\def\endsubequ{\end{subequations}}
\def\begpro{\begin{proposition}}
	\def\endpro{\end{proposition}}
\def\beglem{\begin{lemma}}
	\def\endlem{\end{lemma}}
\def\begass{\begin{assumption}}
	\def\endass{\end{assumption}}
\def\begcor{\begin{corollary}}
	\def\endcor{\end{corollary}}
\def\begproo{\begin{proof}}
	\def\endproo{\end{proof}}
\usepackage{wrapfig}

\usepackage{color}



\begin{document}

	\begin{frontmatter}
		\runtitle{Insert a suggested running title}
		\title{Globally Stable Discrete Time PID Passivity-based Control of Power Converters: Simulation and Experimental Results
		}
		
		\thanks[footnoteinfo]{Corresponding author: Wei He.}
		\author[London]{Alessio Moreschini}\ead{a.moreschini@imperial.ac.uk},
		\author[China]{Wei He}\ead{hwei@nuist.edu.cn},
		\author[MXC]{Romeo Ortega}\ead{romeo.ortega@itam.mx},
        \author[China]{Yiheng Lu}\ead{luyiheng@nuist.edu.cn},
        \author[China]{Tao Li}\ead{litaojia@nuist.edu.cn}
		
		\address[London]{Department of Electrical and Electronic Engineering, Imperial College London, SW7 2AZ London, United Kingdom}
		\address[China]{School of Automation, Nanjing University of Information Science and Technology, 210044  Nanjing, China}
		\address[MXC]{Department of Electrical and Electronic Engineering, ITAM, 01080 Ciudad de M\'{e}xico, Mexico}
		
		\begin{keyword}
			Passivity-Based Control; PID; Nonlinear systems
		\end{keyword}
		
		\begin{abstract}
The key idea behind PID Passivity-based Control (PID-PBC) is to leverage the passivity property of PIDs (for all positive gains) and wrap the PID controller around a passive output to ensure global stability in closed-loop. However, the practical applicability of PID-PBC is stymied by two key facts: (i) the vast majority of practical implementations of PIDs is carried-out in discrete time---discretizing the continuous time dynamical system of the PID; (ii) the well-known problem that passivity is not preserved upon discretization, even with small sampling times. Therefore, two aspects of the PID-PBC must be revisited for its safe practical application. First, we propose a discretization of the PID that ensures its passivity. Second, since the output that is identified as passive for the continuous time system is not necessarily passive for its discrete time version, we construct a new output that ensures the passivity property for the discretization of the system. In this paper, we provide a constructive answer to both issues for the case of power converter models. Instrumental to achieve this objective is the use of the implicit midpoint discretization method---which is a symplectic integration technique that preserves system invariants. Since the reference value for the output to be regulated in power converters is non-zero, we are henceforth interested in the property of passivity of the incremental model---currently known as shifted passivity. Therefore, we demonstrate that the resulting discrete-time PID-PBC defines a passive map for the incremental model and establish shifted passivity for the discretized power converter model. Combining these properties, we prove global stability for the feedback interconnection of the power converter with the discretized PID-PBC. The paper also presents simulations and experiments that demonstrate the performance of the proposed discretization.
		\end{abstract}
		
	\end{frontmatter}
	
\section{Introduction}
\label{sec1}
%
Switching power converters are, nowadays, an essential component of most electrical engineering applications. The ever increasing performance requirements on these devices translates into more stringent specifications on the quality of the converters control. The dynamics of power converters is highly nonlinear, even with fast switching, when their averaged model adequately describes their behavior~\cite{ESCVANORT,ORTetalbook}, and the validity of their linear approximation is restricted to a small neighborhood of the corresponding operating point~\cite{KASSCHVERbook}. Three additional difficulties pertaining to the problem of power converter control are  the following.	\begin{enumerate}
	\item[\textbf{(D1)}] The full state of the converter is usually {\em unknown}, because the implementation of sensors to measure the state is costly and noise-sensitive.
	\item[\textbf{(D2)}] The {\em parameters} of the converter are uncertain and/or time-varying. For instance,  the load of the power converter is highly uncertain, in general.
	\item[\textbf{(D3)}] The overwhelming majority of the practical implementations of power converter controllers is done in {\em discrete time} (DT), while their theoretical developments are carried-out in continuous time (CT)---based on the averaged model.  Hence, additional analysis is needed to assess the stability of the sampled closed-loop system.
\end{enumerate}

{These issues have been discussed extensively by several authors, including \cite{KAZetalbook,MONNOR1,MONNOR2,WANetalbook}} The vast majority of power converters in practical applications are controlled with classical PI loops---partially overcoming some of the issues above. Namely, the PI controller is usually wrapped around a current error signal, whose measurement poses no practical challenge.\footnote{Although very often it is a voltage signal that we want to regulate, people adopt this so-called {\em indirect method} of controlling the voltage through the current, to avoid the problem of unstable zero dynamics of the voltage output~\cite{ORTetalbook,SIRSILbook}.}  There is a widely accepted belief that, \emph{if} the PI is properly tuned, their behavior is acceptable even in the face of (slowly) time-varying changes of the converter parameters, including the converter load~\cite{KAZetalbook,WANetalbook}. The qualifier ``if" in the previous sentence is of outmost importance because, if the range of operation of the system to be controlled is ``wide"---as it is required in modern applications---the task of tuning the gains of a PI (or, for that matter, of any other controller for nonlinear systems) is far from obvious. Unfortunately, to date, the only systematic procedure to carry-out this task, is invoking standard {\em linear systems theory} arguments, \emph{e.g.}, pole placement, stability margins, and applying them to the linearized model of the converter. Various procedures to re-tune the PI gains, including gain-scheduling~\cite{VESILK}, relay auto-tuning~\cite{ASTHAGbook} and adaptation~\cite{ORTKEL}, have been proposed but they all suffer from well-documented serious limitations and drawbacks~\cite{ASTMURbook,RUGSHA}.

The PI gain tuning problem mentioned above has been (partially) overcome with the introduction in~\cite{HERetal} of the PID-PBC, whose main features are the following:
\begin{itemize}
\item The PID-PBC is applicable to a {\em large class} of power converters described by average bilinear models of the form~\eqref{eqn:power_converter}, see also~\cite[Appendix D.2.3]{ORTetalbook}.\footnote{As shown in~\cite{ORTetalbookpid} PID-PBC are applicable to a large class of physical systems including mechanical and electromechanical. For the sake of brevity in this article we restrict ourselves to power converter systems.}
\item \emph{Global stability} of the closed-loop is guaranteed \emph{for all} positive values of the PID tuning gains.
\item Although in its original formulations the PID-PBC assume the availability of the {\em full state} of the system, a recent observer-based versions of the scheme that only require the measurements used in standard PIDs have been reported in~\cite{BOBetal,HEetal}.
\item It is possible to incorporate to the controller an {\it adaptation} feature to estimate on-line some uncertain parameters---in particular the external load as done in ~\cite{BOBetal,HEetal}.
\item {The use of PID-PBC for the design of {\em inverters} has a very {\em close connection} with the widely popular {\em PQ Instantaneous Power} controllers for AC systems of~\cite{AKAbook}, which explains the wide acceptance that this new controller has had in the power electronics community---see ~\cite[Sec. 4.5.4]{ORTetalbook} and~\cite{ZONetal} for a thorough discussion on this issue.}
\end{itemize}

The approach adopted in the design of PID-PBC is in the line of~\cite{ORTetalcsm} and~\cite{ORTetalbook}, which relies on the use of energy balance concepts to control a system. Unlike most classical nonlinear control techniques found in the literature, which try to impose some predetermined dynamic behavior---usually through nonlinearity cancellation, domination of nonlinearites and high gain---energy-based methods exploit and respect the physical structure of the system. PBC is the generic name of this controller design methodology, which achieves stabilization by exploiting the passivity properties of the system. Due to the physical appealing that this methodology has, a vast literature has been advocated to its application to mechanical, electrical and electromechanical systems, see~\cite{ORTetalbook,VANbook}.

{\em Passivity} is the key property of power converters that is exploited in PID-PBC.  The first time that passivity principles were applied for power converter control is  in the foundational paper~\cite{SANVER}. It is well-known~\cite[Lemma 2.1]{ORTetalbookpid} that PID controllers define passive operators\footnote{It can be demonstrated that they satisfy the stronger output strict passivity property.}. Therefore, as a corollary of the Passivity Theorem~\cite{ORTetalbook,VANbook} we have that wrapping a PID around a passive output ensures input-output stability of the system and convergence to zero of the passive output---see~\cite{ORTetalbookpid} for a recent survey on PID-PBC theory and applications.\footnote{It should be underscored that the current use of PIDs in applications, wrapping the PID around a current error signal, does not ensure the aforementioned property, because this signal {\em is not} a passive output. See~\cite{MONetal} for the case study of a Voltage Sourced Inverter.} Passivity has provided a theoretical framework for controller design, which  is a topic of paramount importance. On one hand, it allows practitioners to apply the control law with confidence and, on the other hand, considerably simplifies the commissioning stage, which now that stability is guaranteed for all positive tuning gains, can concentrate on the {\em transient performance} specifications.

In many applications, including power converters, the control objective is to drive a given output to a value \emph{different} from zero, which is associated with the steady-state behavior of the system. In this case, we are interested in proving that the {\em incremental model} is passive---whence, driving the output of the incremental model to zero will ensure the signal of interest converges to its desired value. In~\cite{PERORTESP} it was shown that, for  a general class of models of power converters, it is possible to define an output signal such that the incremental model is passive---this result was later generalized in several directions in~\cite{HERetal}. This fundamental property, called ``passivity of the nonlinear incremental model" in~\cite{JAYetal}, is now  referred as \emph{shifted passivity}~\cite{VANbook} and has played a central role in many recent developments of the control community---see~\cite{KAWMORCUC,MONetalscl} and~\cite[Sec. 4.3]{ORTetalbookpid} for a recent characterization of port-Hamiltonian systems, which are shifted passive and~\cite{SIM} for the generalization to the dissipativity property. In the present context the main interest of this property is that, driving the shifted passive output to zero with a PID control ensures global stability of the desired equilibrium. Moreover, under a precise condition on the systems dissipation, that the  state of the converter system converges to its desired value.

{The PID controller of~\cite{HERetal} enjoys a very wide popularity.} However, it still suffers from the drawback {\bf (D3)} mentioned above. That is, the overwhelming majority of its practical implementations are done in {\em DT}---somehow discretizing the {\em continuous} dynamical system describing the PID. It is well-known that passivity is {\em generally not preserved} under the sampling process, even for {\em arbitrarily small} sampling times. Hence, additional analysis is needed to assess the stability of the sampled closed-loop system.  The phenomenon of loss of passivity under sampling has been extensively analyzed in various works, see {\em e.g.}~\cite{MORetal} for recent advancements and a literature survey. In these studies, the problem has been circumvented by “redefining the output” with the objective of guaranteeing a suitably defined passivity property for the newly-defined output, see, \emph{e.g.},~\cite{LAIAST,KAWMORCUC,KOTTHO,MAC,MATMORMONCYR,MONNOR3,MONNOR4,MORMATMONCYR,MORMONCYR,MORetal}.

This is also the approach adopted in this paper. In our case, we need to propose a discretization procedure for: (i) the CT PID and (ii) the CT model of the power converter. This discretization should be such that we are able to define outputs which are {\em shifted passive}---equivalently, find outputs such that the input-output maps of the incremental models are passive. Providing a solution to this challenging and, to the best of our knowledge, until now open problem is the {\em main contribution} of this paper.

The contributions of our work may be summarized as follows:
\begin{enumerate}
\item[{\bf C1}] We propose, for the first time, a discretization procedure for the design of the DT PID-PBC: the {\em implicit midpoint} rule, which is a well-known symplectic integration method that preserves the systems invariants of motion upon numerical integration~\cite{HAILUBWANbook}.
\item[{\bf C2}] It is shown that the DT PID obtained from the aforementioned discretization procedure defines a {\em passive map} for the incremental model---a property that is established with the incremental variation of the systems storage function.
\item[{\bf C3}] For the discretized power converter model we define an output with respect to which shifted passivity of the system is established, again using the incremental variation of the systems storage function.
\item[{\bf C4}] Combining these two latter properties, which exactly mimic the ones we have for the CT PID-PBC and the CT power converter model, we prove the main global stability result of the feedback interconnection of the power converter and the DT PID-PBC.
\end{enumerate}

The remainder of the paper is organized as follows. In Section~\ref{sec2} we present some preliminaries on power converters models and the CT PID-PBC of~\cite{HERetal} and briefly recall the midpoint discretization method. The passivity properties of the discretized power converter model and the DT PID-PBC are given in Section~\ref{sec3} and~\ref{sec4}, respectively. The main closed-loop stabilization result is presented in Section~\ref{sec5}.  Simulation and {\em experimental} results, which illustrate the {performance} of the proposed DT PI-PBC, are presented in Section~\ref{sec6}. Particular attention is given to two aspects: (i) the effect of increasing the sampling time and (ii) comparison with Euler discretization of the CT PID-PBC. The paper is wrapped-up with concluding remarks and future research in  Section~\ref{sec7}.\\

\noindent {\bf Notation.} $I_n$ is the $n \times n$ identity matrix.  For $x \in \R^n$,  we denote the Euclidean norm as $|x|^2:=x^\top x$. For $q \in \N$ we define the set $\bar q:=\{1,2,\dots,q\}$.  Throughout the paper we consider piece-wise constant inputs $u(t) = u_k$, $t_{k} \leq  t < t_{k+1}$, and a constant sampling time $\delta$, {\em i.e.}, $t_k = k\delta$, for $k\in\N$, an operation implemented via a standard zero-order hold. Given a sequence $(\cdot)_k \in \R^s$, a distinguished constant value $(\cdot)^\star  \in \R^s$ and a map $g:\R^s \to \R^m$ we denote $\tilde {(\cdot)}_k:=(\cdot)_k-(\cdot)^\star $ the error signal, $g^\star :=g((\cdot)^\star )$ the constant vector and the {\em forward difference operator} $\Delta$ as
$
\Delta g((\cdot)_k) \coloneqq g((\cdot)_{k+1}) - g((\cdot)_k).
$
%
\section{Preliminaries}
\label{sec2}
%
In this section we present some preliminaries on power converters models and the CT PID-PBC of~\cite{HERetal}, for further details see~\cite{ORTetalbookpid}. We also briefly recall the midpoint discretization method further elaborated in ~\cite{HAILUBWANbook,KAWMORCUC}.
\subsection{CT power converter models}
\label{subsec21}
As shown in~\cite{HERetal,ORTetalbook,SIRSILbook} the average dynamics of a large class of power converters\footnote{{Similarly to the vast majority of controllers proposed for power converters, throughout the paper it is assumed that the converter operates in continuous conduction mode.}}---operating with sufficiently fast sampling rate---is described by the port Hamiltonian model~\cite{VANbook}
	\begin{equation}
		\label{eqn:power_converter}
		\dot{x} = \Big(J_0 - R  + \sum_{i=1}^{m}u_iJ_i\Big) \frac{\displaystyle \partial^\top  H }{\displaystyle \partial x}(x) + \Big(G_0 + \sum_{i=1}^{m} u_iG_i \Big)E,
	\end{equation}
where $x(t)\in\R^n$ is the converter state vector, consisting of fluxes in inductors and charges in capacitors, $u(t)\in\R^m$ denotes the duty ratio of the switches. The total energy stored in inductors and capacitors is given by
\begin{equation}
\label{h}
H(x)=\frac{1}{2}x^\top Q x,
\end{equation}
with $Q = Q^\top > 0$ determined by the values of the capacitances and inductances, which are assumed positive. The matrices  $J_i = - J_i^\top$, $i\in \bar m$, are the interconnection matrices, $R = R^\top \geq 0$ represents the dissipation matrix due to the presence of resistors in the circuit, $G_0$ and $G_i,\;i \in \bar m,$ are $n\times n$ input matrices, and $E\in\R^n$ contains the external voltage and current sources, which may be switching---a feature that is captured by the matrices $G_i,\;i \in \bar m$. We make the important observation that all the matrices $Q, J_i, R, G_0$ and $G_i$ are {\em constant}.

The {\em power balance} equation for the system~\eqref{eqn:power_converter}---that quantifies the rate of change of the energy in time---is given by
$$
\underbrace{\dot H}_{\mbox{stored}}=- \underbrace{x^\top Q R Q x}_{\mbox{dissipated}}+\underbrace{x^\top Q \Big(G_0 + \sum_{i=1}^{m} u_iG_i \Big)E}_{\mbox{supplied\; power}}.
$$
Let us write the system the standard compact form as
\begin{equation}
	\label{eqn:power_converter_compact}
	\dot{x} = f(x)  + g(x)u,
\end{equation}
with
\begin{equation}
		\begin{aligned}
			f(x) &:= (J_0 - R)Qx  + G_0E, \\
			g(x) &:= \begin{bmatrix}J_1Qx + G_1 E  & | & \hdots & | & J_mQx + G_m E\end{bmatrix}.
			\label{fg}
		\end{aligned}
	\end{equation}
Recall that an {\em assignable} equilibrium $x^\star  \in \R^n$ for the system~\eqref{eqn:power_converter_compact} satisfies the algebraic equation
\begin{equation}\label{eq:admissible_equilibrium}
	0 = f^\star  + g^\star u^\star
\end{equation}
where $u^\star  \in \R^m$ is the corresponding equilibrium control.\footnote{\cite[Proposition B.1]{ORTetalbookpid} The set of assignable equilibria is defined as $\{x^\star \in\R^n|(g^\perp)^\star f^\star=0\}$, where $g^\perp : \R^n \to \R^{(n-m)\times n}$ is a full-rank left annihilator  of $g$. The (unique) corresponding equilibrium control is defined as $u^\star  = -[(g^\star)^\top g^\star ]^{-1}(g^\star)^\top f^\star$. }
\subsection{Definition of an output $y$ such that $\tilde u \mapsto \tilde y$ is passive}
\label{subsec22}
%
	To reveal some useful passivity properties of the system~\eqref{eqn:power_converter} it is shown in \cite[Equation (4.44)]{ORTetalbookpid} that, for every {\em assignable} equilibrium $x^\star  \in \R^n$, there exists an output map $y$ and an associated equilibrium output $y^\star$ such that the map $\tilde u \mapsto \tilde y $ is passive with an the {\em incremental} storage function $H(\tilde x)={1 \over 2} \tilde x^\top Q \tilde x$. Specifically, the system can be equivalently written in shifted form as
	\begin{align*}
		\dot{x} = (J_0 - R)Q \tilde x + g(x) \tilde u + (g(x) - g^*)u^*.
	\end{align*}
	Consequently, the derivative of the incremental storage function $H(\tilde x)={1 \over 2} \tilde x^\top Q \tilde x$ yields
	\begin{equation}
		\begin{aligned}
			\dot{H}(\tilde x) =&-\tilde x^\top QRQ \tilde x+ \tilde x^\top Q g(x) \tilde u.
		\end{aligned}
	\end{equation}
	We make now the observation that, exploiting the skew-symmetry ``properties" of $g(x)$, we can write
	$
	\tilde x^\top Q g(x)=\tilde x^\top Q g^\star .
	$
	Hence, by defining the incremental output as
	\begin{equation}
		\label{ynew}
	\tilde y = (g^\star)^\top Q \tilde x,
	\end{equation}
	with the associated equilibrium output
	$y^\star  = (g^\star)^\top Qx^\star $,
	we obtain that
	$$
	\dot{H}(\tilde x)=-\tilde x^\top QRQ \tilde x+ \tilde y^\top \tilde u \leq  \tilde y^\top \tilde u ,
	$$
	which establishes passivity of the map $\tilde u \mapsto \tilde y$. Therefore, the system is \emph{shifted passive}.
\subsection{Global stabilization of the PID-PBC}
\label{subsec23}
%
We recall that the control objective is to stabilize with a suitable CT PID-PBC an (assignable) equilibrium $x^\star $ of the system, namely we want to verify that
\begequ
\lab{concon}
\liminft x(t)=x^\star,
\endequ
for all initial conditions and all PID-PBC gains. To achieve this end, we propose the CT PID-PBC~\cite[Eq. 4.1]{ORTetalbookpid} described by the equations
\begin{equation}\label{eqn:PID}
	\begin{aligned}
		\dot{\xi} &= \tilde y \\
		u &= -K_P \tilde y  - K_I\xi - K_D\dot{\tilde y},
	\end{aligned}
\end{equation}
where $K_P>0$, $K_I>0$, and $K_D \geq 0$ are the tuning gains.

\begin{proposition}\em
	\label{pro1}
	Consider the system~\eqref{eqn:power_converter_compact}, with the desired state equilibrium $x^\star$, in closed-loop with the CT PID-PBC~\eqref{eqn:PID}, where $\tilde y$ is given by~\eqref{ynew}. \\
	
\noindent {\bf (i)} 	The equilibrium $(x^\star,\xi^\star )$ of the closed-loop system is {\em globally stable} for any $K_P>0$, $K_I>0$ and $K_D \geq 0$.

\noindent {\bf (ii)} If there exists an $\alpha>0$ such that
	\begin{equation}
	\label{daminj}
	R +  g^\star  K_P (g^\star)^\top  > \alpha I_n.
	\end{equation}
we have that \eqref{concon}  holds  for all initial conditions.
\end{proposition}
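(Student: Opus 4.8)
The plan is to certify the feedback interconnection of the two passive maps already isolated in the excerpt—the shifted-passive plant with storage $H(\tilde x)=\frac12\tilde x^\top Q\tilde x$ and the PID—by building a single composite Lyapunov function. First I would fix the controller equilibrium through the error coordinate $\tilde\xi:=\xi-\xi^\star$: evaluating~\eqref{eqn:PID} at $x^\star$ gives $\tilde y^\star=0$ and $\dot{\tilde y}^\star=0$, whence $u^\star=-K_I\xi^\star$, i.e. $\xi^\star=-K_I^{-1}u^\star$. In these coordinates the control error reads $\tilde u=-K_P\tilde y-K_I\tilde\xi-K_D\dot{\tilde y}$, with $\tilde y=(g^\star)^\top Q\tilde x$. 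The candidate is
\[
V(\tilde x,\tilde\xi)=H(\tilde x)+\tfrac12\tilde\xi^\top K_I\tilde\xi+\tfrac12\tilde y^\top K_D\tilde y .
\]
Because $Q>0$ and $K_I>0$, the first two terms are positive definite and radially unbounded in $(\tilde x,\tilde\xi)$, while the $K_D\ge0$ term is a nonnegative function of $\tilde x$; hence $V$ is a valid, radially unbounded Lyapunov candidate.

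For part \textbf{(i)} I would differentiate $V$ along the closed loop and insert the shifted-passivity inequality $\dot H(\tilde x)\le\tilde y^\top\tilde u$ of Section~\ref{subsec22}. Using $\dot{\tilde\xi}=\tilde y$ and symmetry of $K_I,K_D$, the cross terms are exact differentials, $\tilde y^\top K_I\tilde\xi=\frac{d}{dt}(\tfrac12\tilde\xi^\top K_I\tilde\xi)$ and $\tilde y^\top K_D\dot{\tilde y}=\frac{d}{dt}(\tfrac12\tilde y^\top K_D\tilde y)$. Substituting $\tilde u$ and collecting terms, the integral and derivative contributions telescope against these differentials, leaving
\[
\dot V\le-\tilde y^\top K_P\tilde y\le0 .
\]
This establishes Lyapunov stability of $(x^\star,\xi^\star)$ for every $K_P>0,K_I>0,K_D\ge0$; radial unboundedness upgrades it to global stability and guarantees bounded trajectories.

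For part \textbf{(ii)} I would redo the computation keeping the exact dissipation term $\dot H(\tilde x)=-\tilde x^\top QRQ\tilde x+\tilde y^\top\tilde u$, so that the same cancellation gives
\[
\dot V=-\tilde x^\top Q\bigl(R+g^\star K_P(g^\star)^\top\bigr)Q\tilde x .
\]
Under~\eqref{daminj} this is bounded by $-\alpha|Q\tilde x|^2\le0$, which is negative definite in $\tilde x$ since $Q>0$. As trajectories are bounded by part (i), I would invoke LaSalle's invariance principle: on the largest invariant set inside $\{\dot V=0\}$ one has $\tilde x\equiv0$, hence $x\equiv x^\star$, $\dot x\equiv0$, $\tilde y\equiv0$ and $\dot{\tilde y}\equiv0$. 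Substituting into the plant equation gives $0=f^\star+g^\star u$ with $u=-K_I\xi$ on the set; subtracting the equilibrium identity $0=f^\star+g^\star u^\star$ yields $g^\star K_I\tilde\xi=0$. Since the equilibrium-control formula in the footnote to~\eqref{eq:admissible_equilibrium} presupposes full column rank of $g^\star$, and $K_I>0$, this forces $\tilde\xi\equiv0$, the invariant set collapses to the origin, and~\eqref{concon} follows.

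I expect two points to require the most care. The first is well-posedness: the derivative action couples $u$ to $\dot x$ and hence back to $u$, so the closed loop is implicit, and one must argue invertibility of $I_m+K_D(g^\star)^\top Q g(x)$ (or restrict $K_D$) before the Lyapunov computation is meaningful. The second, and genuinely delicate, step is the LaSalle analysis for part (ii): the decrease $\dot V\le-\alpha|Q\tilde x|^2$ certifies only $\tilde x\to0$, and recovering $\tilde\xi\to0$ rests entirely on the full-column-rank property of $g^\star$ used to pass from $g^\star K_I\tilde\xi=0$ to $\tilde\xi=0$.
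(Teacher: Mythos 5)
Your proposal is correct and follows essentially the same route as the paper: the identical composite Lyapunov function $V=H(\tilde x)+\tfrac12\tilde\xi^\top K_I\tilde\xi+\tfrac12\tilde y^\top K_D\tilde y$ with $\xi^\star=-K_I^{-1}u^\star$, the same cancellation yielding $\dot V=-\tilde x^\top Q\bigl[R+g^\star K_P(g^\star)^\top\bigr]Q\tilde x$, and the damping-injection bound $\dot V\le-\alpha|Q\tilde x|^2$ for part (ii). The only difference is that you spell out via LaSalle the detectability argument the paper delegates to a citation (note that claim (ii) only asserts $x\to x^\star$, so your extra step recovering $\tilde\xi\to0$ from full column rank of $g^\star$ is not needed), and you rightly flag the well-posedness of the implicit derivative term, which the paper leaves implicit.
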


\begin{proof}
The stability of the closed-loop system is established with the Lyapunov function
\begin{equation}
\begin{aligned}
V(\tilde x,\tilde \xi)& \coloneqq H(\tilde x) +\frac{1}{2}\tilde \xi^\top K_I \tilde \xi+ \frac{1}{2} \tilde x^\top Q g^\star(g^\star)^\top Q \tilde x,
\label{hc}
 \end{aligned}
\end{equation}
where $\xi^\star=-K_I^{-1} u^\star $, which follows from \eqref{eqn:PID}, and we note that
$$
 \tilde x^\top Q g^\star(g^\star)^\top Q \tilde x=\tilde y ^\top K_D \tilde y.
$$
Some simple calculations show that the derivative of the Lyapunov function yields
\begin{equation*}
	\begin{aligned}
		\dot V&= -\tilde x^\top QRQ \tilde x  -\tilde y ^\top K_P \tilde y,
	\end{aligned}
\end{equation*}
from which global stability follows.
	
The asymptotic claim follows expressing the right hand term above as
$$
\tilde x^\top QRQ \tilde x+\tilde y ^\top K_P \tilde y = \tilde x^\top Q [R + g^\star  K_P (g^\star)^\top]Q \tilde x,
$$
exploiting the damping injection assumption~\eqref{daminj}, which ensures $\dot V \leq -\alpha |Q \tilde x|^2$, and invoking classical detectability arguments \cite[Theorem A.2]{ORTetalbookpid}.
\end{proof}
\subsection{Midpoint discretization}
\label{subsec24}
{
The dynamics of Hamiltonian systems preserve volume in phase space—a consequence of their underlying symplectic structure~\cite{HAILUBWANbook,MORMONCYR,KAWMORCUC,KOTTHO}. This structure, central to Hamiltonian mechanics, is associated with the conservation of phase space volume (Liouville’s theorem) and, in many cases, energy, depending on the Hamiltonian. In long-term simulations and digital control design, preserving these geometric properties is crucial. A numerical method is symplectic if it preserves the symplectic form---{\em i.e.}., the geometric structure of phase space. To this end, symplectic integrators, {\em e.g.}, midpoint methods, are commonly employed to construct structure-preserving discrete-time approximations.

Consider an autonomous continuous-time system
\begin{align*}
	\dot{x}(t) &= F(x(t)),\\
	 y(t) &= Cx(t),
\end{align*}
where $x(t)\in\R^n$ is the state, $y(t)\in\R^m$ is the output and $C \in \R^{m \times n}$ is a constant matrix. A discrete-time approximation of the dynamics is typically obtained by applying a forward finite-difference method to approximate the time derivative, namely
\begin{equation}\label{appr_xy}
	\dot{x}(t) \approx \frac{1}{\delta} (x_{k+1} - x_{k})\  \implies \ \dot{y}(t) \approx \frac{1}{\delta} {C}(x_{k+1} - x_{k}),
\end{equation}
where for all $k\in\N$, $\delta\in\N$ denotes a constant {\em sampling} time, $t_k \coloneqq k\delta $ denotes the sampling instant, $x_{k}$ approximate the state at the sampling instant $x(t_k)$ (\emph{i.e.}, $x_{k} \approx x(t_k)$). While the finite-difference approximation is straightforward, it does not generally preserve the underlying symplectic structure of Hamiltonian systems. To address this, the implicit midpoint rule, a second-order symplectic integrator, is often employed. This method updates the system state using the DT equation
\begin{equation} \label{eq:imr}
	x_{k+1} = x_k + \delta F\left( \frac{x_k + x_{k+1}}{2} \right),
\end{equation}
where the right-hand side evaluates the vector field $F$ at the midpoint between $x_k$ and $x_{k+1}$. Equation~\eqref{eq:imr} is implicit because $x_{k+1}$ appears on both sides, making the update a nonlinear algebraic equation that must be solved at each step. This can be done using iterative methods such as fixed-point iteration or Newton-Raphson, depending on the properties of $F$, see~\cite{HAILUBWANbook}. Despite this added complexity, the implicit midpoint rule offers significant benefits in preserving the structure and long-term behavior of Hamiltonian systems. In particular cases, such as the system studied in this work, we will later show that \eqref{eq:imr} can be solved {\em explicitly}, eliminating the need for iterative computation while still retaining the benefits of symplectic integration. To simplify notation and analysis, the midpoint argument in \eqref{eq:imr} is often denoted by
\begin{equation} \label{zk}
z_k={1 \over 2}(x_{k+1}+x_k),
\end{equation}
so that $z_k$ can be obtained as the solution of the implicit equation $z_k = x_k + \frac{\delta}{2}F(z_k)$. While introducing $z_k$ does not change the structure of the method, it is helpful in later analysis and can make certain expressions more compact.
}
%
\section{Shifted Passivity of the Discretized Power Converter Model}
\label{sec3}
%
In this section we give an answer to the first central question of the paper: derive a DT version of the power converter model~\eqref{eqn:power_converter}---equivalently~\eqref{eqn:power_converter_compact}---for which we can identify an {\em output} such that the associated {\em incremental model} is passive. Equivalently, find an output such the input-output map of the DT power converter is {\em shifted passive}. Not surprisingly, this task is achieved applying the midpoint discretization method to the standard PID structure and, in a natural way, looking at the incremental variation of the converters {\em energy function}~\eqref{h}.

\begin{proposition}
	\em
	\label{pro2}
	The DT model of the CT converter dynamics~\eqref{eqn:power_converter} obtained via midpoint discretization yields a shifted passive input-output map with respect to the output
	\begequ
	\label{y}
	y_k = (g^\star)^\top Q z_k,
	\endequ
	with $z_k$ as in \eqref{zk}, corresponding equilibrium output
	\begin{equation}
		\label{yk}
		y^\star=(g^\star)^\top Q x^\star,
	\end{equation}
	and storage function the {\em incremental function} $H(\tilde x_k)$.
\end{proposition}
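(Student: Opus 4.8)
The plan is to replicate, in discrete time, the continuous-time computation of Subsection~\ref{subsec22}, with the midpoint point $z_k$ playing the role that $x$ and $\tilde x$ play in the CT argument. First I would record the midpoint update in \emph{shifted} form. Since the input is held at $u_k$ over the step, and since $f(x)=(J_0-R)Q\tilde x-g^\star u^\star$ together with $g(x)u=g(x)\tilde u+g(x)u^\star$, evaluating the vector field at $z_k$ gives
\begin{equation*}
x_{k+1}-x_k=\delta\big[(J_0-R)Q\tilde z_k+g(z_k)\tilde u_k+(g(z_k)-g^\star)u^\star\big],
\end{equation*}
which is the exact discrete counterpart of the CT shifted form $\dot x=(J_0-R)Q\tilde x+g(x)\tilde u+(g(x)-g^\star)u^\star$.

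The crucial step, and the reason the midpoint rule is the right tool, is an \emph{exact} energy-increment identity for the quadratic storage function. Because $Q=Q^\top$, the elementary factorization $\tfrac12 a^\top Q a-\tfrac12 b^\top Q b=\tfrac12(a+b)^\top Q(a-b)$ applied with $a=\tilde x_{k+1}$, $b=\tilde x_k$, together with $\tilde x_{k+1}+\tilde x_k=2\tilde z_k$ and $\tilde x_{k+1}-\tilde x_k=x_{k+1}-x_k$, yields
\begin{equation*}
\Delta H(\tilde x_k)=H(\tilde x_{k+1})-H(\tilde x_k)=\tilde z_k^\top Q\,(x_{k+1}-x_k).
\end{equation*}
This is the exact discrete analogue of the chain rule $\dot H=\tilde x^\top Q\dot x$, holding with no remainder precisely because of the midpoint evaluation. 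Substituting the shifted update gives
\begin{equation*}
\Delta H(\tilde x_k)=\delta\,\tilde z_k^\top Q(J_0-R)Q\tilde z_k+\delta\,\tilde z_k^\top Q g(z_k)\tilde u_k+\delta\,\tilde z_k^\top Q(g(z_k)-g^\star)u^\star.
\end{equation*}

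Each term is then dispatched by skew-symmetry, mirroring the CT argument. Skew-symmetry of $J_0$ gives $\tilde z_k^\top QJ_0Q\tilde z_k=0$, so the first term reduces to $-\delta\,\tilde z_k^\top QRQ\tilde z_k\le 0$. For the cross term I would invoke the ``skew-symmetry property'' of $g$ evaluated at the midpoint: since the $i$-th column obeys $g_i(z_k)-g_i^\star=J_iQ\tilde z_k$ with $J_i=-J_i^\top$, one has $\tilde z_k^\top Q(g(z_k)-g^\star)=0$, whence $\tilde z_k^\top Qg(z_k)=\tilde z_k^\top Qg^\star$ and $\delta\,\tilde z_k^\top Qg(z_k)\tilde u_k=\delta\,\tilde y_k^\top\tilde u_k$ with $\tilde y_k=(g^\star)^\top Q\tilde z_k$ as in~\eqref{y}. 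The very same identity annihilates the last term. Collecting,
\begin{equation*}
\Delta H(\tilde x_k)=-\delta\,\tilde z_k^\top QRQ\tilde z_k+\delta\,\tilde y_k^\top\tilde u_k\le\delta\,\tilde y_k^\top\tilde u_k,
\end{equation*}
the sampled dissipation inequality establishing shifted passivity of $\tilde u_k\mapsto\tilde y_k$.

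The main obstacle is conceptual rather than computational: everything hinges on the exact identity $\Delta H(\tilde x_k)=\tilde z_k^\top Q\,(x_{k+1}-x_k)$, which is what makes all three skew-symmetric contributions cancel or collapse \emph{exactly}, with no $O(\delta^2)$ remainder. A naive forward-Euler scheme evaluates the vector field at $x_k$ while the quadratic-difference factorization still centers on $\tilde z_k$, so the two no longer align; the spurious cross terms then destroy the sign-definite structure and passivity is lost. Thus the real content is recognizing the midpoint rule as the discretization for which the continuous-time cancellations survive verbatim: once the update is written in shifted form at $z_k$, the remaining manipulations are simply the discrete echo of Subsection~\ref{subsec22}. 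I would also note that the argument presupposes solvability of $z_k=x_k+\tfrac{\delta}{2}F(z_k)$, which for this class of models is addressed separately.
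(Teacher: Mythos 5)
Your proposal is correct and follows essentially the same route as the paper's proof: the exact midpoint identity $\Delta H(\tilde x_k)=\tilde z_k^\top Q(x_{k+1}-x_k)$, followed by the skew-symmetry cancellations $\tilde z_k^\top QJ_0Q\tilde z_k=0$ and $\tilde z_k^\top Q(g(z_k)-g^\star)=0$. The only cosmetic difference is that you shift the dynamics first (splitting $g(z_k)u_k$ as $g(z_k)\tilde u_k+(g(z_k)-g^\star)u^\star$), whereas the paper subtracts the equilibrium relation $f^\star+g^\star u^\star=0$ and adds and subtracts $g^\star u_k$; both lead to the same dissipation inequality.
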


\begin{proof}
Given that $z_k = \frac{1}{2}(x_{k+1} + x_k)$, the midpoint discretization of the CT converter dynamics~\eqref{eqn:power_converter} is given as
\begin{equation}\label{eq:discretized_model}
	\begin{aligned}
		x_{k+1} &= x_k + \delta f\left(z_k\right) + \delta g\left(z_k\right)u_k,
	\end{aligned}
\end{equation}
with $f(x)$ and $g(x)$ given in~\eqref{fg} with and $z_k$ found in \eqref{zk}. Note that the assignable equilibrium points $x^\star$ for the CT systems, that is, the vectors satisfying~\eqref{eq:admissible_equilibrium}, are also equilibria of~\eqref{eq:discretized_model}. Hence, for a fixed assignable equilibrium $x^\star $ for the system~\eqref{eq:discretized_model} we need to prove that the output $y$ given in~\eqref{y} is such that the map $\tilde u_k \mapsto \tilde y_k$ (with $\tilde y_k = y_k - y^\star$) is passive. We will establish the required passivity property with the {\em incremental energy function} given by $H(\tilde x_k)$. Towards this end, compute the variation of the incremental  energy function, that is given by applying the forward difference operator $\Delta$ to $H(\tilde x_k)$, \emph{i.e.}, for all $k\in\N$ we denote
$$
\Delta H(\tilde x_k) \coloneqq H(\tilde x_{k+1}) - H(\tilde x_k).
$$
After some simple manipulations we can prove that
\begin{equation}		\label{delh0}
	\begin{aligned}
		\Delta H(\tilde x_k) =& \tilde z_k^\top Q (\tilde x_{k+1}-\tilde x_k)\\
		=& \delta\tilde z_k^\top Q \left(f\left(z_k\right)  + g\left(z_k\right)u_k  \right)\\	
		=& \delta\tilde z_k^\top Q \left[f\left(z_k\right) - f^\star  + g\left(z_k\right) u_k - g^\star u^\star  \right]\\
		=& -\delta\tilde z_k^\top QRQ\tilde z_k  + \delta\tilde z_k^\top Q\left[g(z_k)u_k - g^\star u^\star  \right].
	\end{aligned}
\end{equation}
The last right hand term may be expressed as
\begin{align}
	&\tilde z_k^\top Q\left[g(z_k)u_k - g^\star u^\star  \right]  \nonumber\\
	= &\tilde z_k^\top Qg(z_k)u_k - \tilde z_k^\top Qg^\star u^\star   \nonumber\\
	=&  	\tilde z_k^\top Qg(z_k)u_k - \tilde z_k^\top Qg^\star u_k + \tilde z_k^\top Qg^\star u_k - \tilde z_k^\top Qg^\star u^\star \nonumber \\
	\label{last_eq12}
	=&  	\tilde z_k^\top Q \left[g(z_k) - g^\star \right]u_k + \tilde z_k^\top Qg^\star \tilde u_k.	
\end{align}
Note however that, from~\eqref{fg}, the following
\begin{align*}
	g(z_k) - g^\star  &= \sum_{i=1}^{m} \Big( J_iQz_k + G_i E \Big) - \Big( J_iQx^\star  + G_i E \Big) \\
	&=\sum_{i=1}^{m} J_iQ (z_k - x^\star )=\sum_{i=1}^{m} J_iQ \tilde z_k,
\end{align*}
holds. Therefore, using the skew-symmetric property of $J_i$, we have that $\tilde z_k^\top Q \left[g(z_k) - g^\star \right] = 0$. Hence,	
\begin{align}
	\frac{1}{\delta}\Delta H(\tilde x_k) & = -\tilde z_k^\top QRQ\tilde z_k+\tilde z_k^\top Qg^\star \tilde u_k \\
	& = -\tilde z_k^\top QRQ\tilde z_k+  \tilde y_k^\top \tilde u_k \leq \tilde y_k^\top \tilde u_k
	\label{delh}
\end{align}
where, to obtain the last identity, we have used \eqref{y}.
\end{proof}

We conclude this section by revisiting the observation from Section~\ref{subsec23} that, for the DT model of the power converter, it is possible to derive an {\em explicit} expression for the system dynamics without the need to solve an implicit function. Indeed, the model~\eqref{eq:discretized_model} can be equivalently expressed as
\begin{align}
\lab{exep}
x_{k+1} =\mathcal{A}(u_k) x_k + \mathcal{B}(u_k) E,
\end{align}
where the matrices $\mathcal{A}(u_k)$ and $\mathcal{B}(u_k)$ are defined by
\begin{align*}
	\mathcal{A}(u_k) &\coloneqq\left[I_n-\frac{\delta }{2}\mathcal{N}(u_k)Q\right]^{-1} \left[I_n+\frac{\delta }{2}\mathcal{N}(u_k)Q\right],\\
	\mathcal{B}(u_k) &\coloneqq \delta \left[I_n-\frac{\delta }{2}\mathcal{N}(u_k)Q\right]^{-1} \mathcal{M}(u_k),
\end{align*}
with $\mathcal{N}(u_k)$ and $\mathcal{M}(u_k)$ defined by
\begin{equation}\label{matricesNM}
	\begin{aligned}
		\mathcal{N}(u_k) &\coloneqq J_0 - R + \sum_{i=1}^{m}u_i(k)J_i, \\
		\mathcal{M}(u_k) &\coloneqq  G_0 +  \sum_{i=1}^{m} u_i(k)G_i.
	\end{aligned}
\end{equation}

%
\section{Shifted Passivity of the DT PID}
\label{sec4}
%
As explained in the introduction, in typical practical applications the power converter is realized physically and we apply for its control a numerically implemented DT version of the controller---in our case a PID. A key step in the design of CT PID-PBC is the identification of an output  such that its incremental model is passive, that is, a shifted passive output.\footnote{The qualifier ``shifted" is compulsory because, even though the desired value for the output $y^\star $ is equal to zero, the equilibrium input $u^\star  \neq 0$.} The motivation to identify this shifted passive output---which is the {\em raison d'\^etre} of PID-PBC---is that it is well-known that CT PID controllers define {\em output strictly passive maps}~\cite[Lemma 2.1]{ORTetalbookpid}. Consequently, closing the loop around the shifted passive output will ensure that output regulation is ensured---and under some additional detectability conditions that the desired equilibrium point is globally stabilized.

Since it is well-known that passivity is usually not preserved under sampling it is necessary to identify such an output for a DT version of the power converter model~\eqref{eqn:power_converter_compact}---a task that was carried-out in the previous section. Instrumental to establish this result was the use of the symplectic structure-preserving midpoint discretization method briefly explained in Section~\ref{subsec23}.  To complete the design we need to develop a DT version of the PID, for which we can prove its shifted passivity, a task that is done in this section.

\begin{proposition} \em
\label{pro3}
Consider the CT model of the PID~\eqref{eqn:PID}, its midpoint discretization with input the output~\eqref{y} is given by\footnote{The derivative term in the PID is handled proposing  $\frac{1}{\delta}\mathcal{C} (x_{k+1} - x_{k})$ as DT approximation of $\dot y$ in consistence with \eqref{appr_xy}.}
\begali{
	\nonumber
	\xi_{k+1} &= \xi_k + \delta \tilde y_k,\\
	u_k &= -K_P \tilde y_k - \frac{1}{2}K_I(\xi_{k+1} + \xi_k) - \frac{1}{\delta}K_D\mathcal{C} \Delta x_{k} ,
	\label{xiuk}	
}
where, to simplify the notation, we defined the constant matrix
\begin{equation}
\label{c}
\mathcal{C}:= (g^\star)^\top Q,
\end{equation}
and $K_P>0$, $K_I>0$, and $K_D \geq 0$ are the tuning gains. The map $\tilde y \mapsto - \tilde u $ associated to~\eqref{xiuk} is {\em output strictly passive} with storage function
\begin{equation}
\label{stofunpid}
H_c(\tilde \xi_k,\tilde x_k) = \frac{1}{2}\tilde \xi_k^\top K_I \tilde \xi_k + \frac{1}{2}\tilde x_k^\top \mathcal{C}^\top K_D\mathcal{C} \tilde x_k.
\end{equation}
\end{proposition}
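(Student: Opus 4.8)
The plan is to compute the forward increment $\Delta H_c(\tilde\xi_k,\tilde x_k)$ of the proposed storage function along the DT PID dynamics~\eqref{xiuk} and show that, after dividing by the sampling time $\delta$ (so as to match the per-sample rate used in Proposition~\ref{pro2}), it equals $\tilde y_k^\top(-\tilde u_k)$ minus a term that is negative definite in the passive output $\tilde y_k$. First I would fix the error coordinates. Setting $\tilde y_k = y_k - y^\star$ and using~\eqref{y},~\eqref{yk},~\eqref{c} together with $z_k = \frac12(x_{k+1}+x_k)$ gives the basic identity $\tilde y_k = \mathcal{C}\tilde z_k$ and, more importantly, $\mathcal{C}(\tilde x_{k+1}+\tilde x_k) = \mathcal{C}(x_{k+1}+x_k-2x^\star) = 2\mathcal{C}\tilde z_k = 2\tilde y_k$. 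Evaluating~\eqref{xiuk} at the equilibrium (where $\tilde y_k = 0$ and $\Delta x_k = 0$) identifies $\xi^\star = -K_I^{-1}u^\star$, so that $u^\star = -K_I\xi^\star$ and the incremental input becomes $-\tilde u_k = K_P\tilde y_k + \frac12 K_I(\tilde\xi_{k+1}+\tilde\xi_k) + \frac1\delta K_D\mathcal{C}\,\Delta x_k$, where I used $\frac12(\xi_{k+1}+\xi_k)-\xi^\star = \frac12(\tilde\xi_{k+1}+\tilde\xi_k)$.

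Next I would evaluate $\Delta H_c$ term by term, relying twice on the discrete product-rule identity $\tfrac12 a^\top K a - \tfrac12 b^\top K b = \tfrac12(a-b)^\top K(a+b)$, valid for symmetric $K$. For the integral part, since $\tilde\xi_{k+1}-\tilde\xi_k = \delta\tilde y_k$, this yields $\Delta(\tfrac12\tilde\xi_k^\top K_I\tilde\xi_k) = \tfrac{\delta}{2}(\tilde\xi_{k+1}+\tilde\xi_k)^\top K_I\tilde y_k$. For the derivative part, the difference factor is $\mathcal{C}(\tilde x_{k+1}-\tilde x_k) = \mathcal{C}\,\Delta x_k$ and the sum factor is exactly $2\tilde y_k$ by the identity above, giving $\Delta(\tfrac12\tilde x_k^\top\mathcal{C}^\top K_D\mathcal{C}\tilde x_k) = (\mathcal{C}\,\Delta x_k)^\top K_D\tilde y_k$.

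Dividing the sum of these two contributions by $\delta$ and comparing with $\tilde y_k^\top(-\tilde u_k)$ computed in the first step, I would use the symmetry of $K_I$ and $K_D$ to see that the integral and derivative cross terms cancel identically, leaving $\frac1\delta\Delta H_c = \tilde y_k^\top(-\tilde u_k) - \tilde y_k^\top K_P\tilde y_k \le \tilde y_k^\top(-\tilde u_k) - \lambda_{\min}(K_P)\,|\tilde y_k|^2$, which is the claimed output strict passivity (strict in the regulated output $\tilde y_k$) with storage function $H_c$.

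The hard part will be bookkeeping rather than conceptual: getting the midpoint averaging factors and powers of $\delta$ consistent so that the integral cross term produced by $\Delta H_c$ matches the $\tfrac12 K_I(\tilde\xi_{k+1}+\tilde\xi_k)$ term in $-\tilde u_k$ exactly. The decisive structural fact that makes everything collapse is that the output is evaluated at the midpoint $z_k$, so the ``sum'' factor $\mathcal{C}(\tilde x_{k+1}+\tilde x_k)$ arising from the telescoping identity equals precisely $2\tilde y_k$; this is what aligns the derivative contribution in $\Delta H_c$ with the $\tfrac1\delta K_D\mathcal{C}\,\Delta x_k$ term in $-\tilde u_k$, and it is the discrete counterpart of the cancellation underlying the CT analysis of Proposition~\ref{pro1}.
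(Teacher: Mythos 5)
Your proposal is correct and follows essentially the same route as the paper's proof: compute $\Delta H_c$ via the telescoping identity $\tfrac12 a^\top K a-\tfrac12 b^\top K b=\tfrac12(a-b)^\top K(a+b)$, use the midpoint structure to get $\tfrac12\mathcal{C}(\tilde x_{k+1}+\tilde x_k)=\tilde y_k$, identify $u^\star=-K_I\xi^\star$ to pass to shifted coordinates, and substitute the control law to obtain $\frac1\delta\Delta H_c=-\tilde y_k^\top K_P\tilde y_k-\tilde y_k^\top\tilde u_k$. The only difference is cosmetic (you isolate $-\tilde u_k$ first and then compare, whereas the paper substitutes the control expression directly into $\Delta H_c$), so no further comment is needed.
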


\begin{proof}
To prove that the discretized PID is shifted passive we check the incremental variation of its storage function~\eqref{stofunpid}. That is,
{$$
\Delta H_c(\tilde \xi_k,\tilde x_k) \coloneqq H_c(\tilde \xi_{k+1},\tilde x_{k+1}) - H_c(\tilde \xi_{k}, \tilde x_{k}).
$$
} This yields that
\begin{equation*}
	\begin{aligned}
		\Delta H_c =& \frac{1}{2}(\tilde \xi_{k+1} -\tilde \xi_{k})^\top K_I (\tilde \xi_{k+1} +\tilde \xi_{k}) \\
		&+\frac{1}{2}(\tilde x_{k+1} +\tilde x_{k})^\top \mathcal{C}^\top K_D\mathcal{C} (\tilde x_{k+1} -\tilde x_{k})\\
		=& \frac{\delta}{2}\tilde y_k^\top   K_I (\tilde \xi_{k+1} +\tilde \xi_{k})+ y_k^\top K_D \mathcal{C} (\tilde x_{k+1} - \tilde x_k)\\
		=& \tilde y_k^\top \left( \frac{\delta}{2} K_I (\tilde \xi_{k+1} +\tilde \xi_{k})   + K_D \mathcal{C} (\tilde x_{k+1} -  \tilde x_k) \right)\\
		=& \tilde y_k^\top \left( \frac{\delta}{2} K_I (\tilde \xi_{k+1} +\tilde \xi_{k})   + K_D \mathcal{C} \Delta x_{k} \right),
	\end{aligned}
\end{equation*}
where $\tilde y_k = y_k - y^\star$. Now, computing the equilibria of \eqref{xiuk} we get
$$
u^\star =-K_I \xi^\star .
$$
Hence, we can write
\begalis{
\frac{\delta}{2}K_I(\tilde \xi_{k+1} + \tilde \xi_k) = \frac{\delta}{2}K_I(\xi_{k+1} + \xi_k) + \delta u^\star .
}
This, together with the fact that, from~\eqref{xiuk}, we have
\begin{align*}
\frac{\delta}{2}K_I(\tilde \xi_{k+1} + \tilde \xi_k) + K_D\mathcal{C} \Delta x_{k} 	 = -\delta \tilde u_k -\delta K_P \tilde y_k,
\end{align*}
which yields
\begin{equation}
	\label{delhc}
	\begin{aligned}
		\frac{1}{\delta} \Delta H_c =&  - \tilde y_k^\top  K_P \tilde y_k - \tilde y_k^\top \tilde u_k,
	\end{aligned}
\end{equation}
hence completing the proof.
\end{proof}

\section{Closed-Loop Stability}
\label{sec5}
%
In this section we present the main result of the paper pertaining to the stabilization of a desired equilibrium $x^\star $ of the power converter~\eqref{eqn:power_converter} with the DT PID-PBC \eqref{xiuk}. In particular, to control the power converter dynamics~\eqref{eqn:power_converter} with the DT PID-PBC~\eqref{xiuk}, we consider a digital-to-analog converter such that $u(t)$ is a piece-wise constant input given by $u(t) = u_k$, for all $t\in [k\delta, k\delta+\delta )$---an operation implemented via a standard zero-order hold.

{
\begin{proposition} \em
\label{pro4}
Consider the power converter dynamics~\eqref{eqn:power_converter} with desired equilibrium $x^\star $, and its DT model ~\eqref{eq:discretized_model} with output the shifted passive signal~\eqref{y}, in closed-loop with the DT PID-PBC~\eqref{xiuk}. \\

\noindent {\bf (i)} For all choices of the tuning gains $K_P>0$, $K_I>0$, and $K_D \geq 0$ we have that the equilibrium  of the closed loop system $(x^\star,\xi^\star )$ is {\em globally stable}.

\noindent {\bf (ii)} If the {\em damping injection assumption}\footnote{The damping injection condition for global asymptotic stability of the DT and the CT PID controllers coincide.} ~\eqref{daminj} is satisfied, we have that, $\liminf z_k=x^\star,$ for all initial conditions.
\end{proposition}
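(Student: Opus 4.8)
The plan is to combine the two passivity properties established in Propositions~\ref{pro2} and~\ref{pro3} via the Passivity Theorem, exactly mirroring the continuous-time argument in Proposition~\ref{pro1}. The natural Lyapunov (storage) candidate for the closed loop is the sum of the two storage functions,
\[
W(\tilde x_k, \tilde \xi_k) := H(\tilde x_k) + H_c(\tilde \xi_k, \tilde x_k),
\]
which is positive definite and radially unbounded in $(\tilde x_k, \tilde \xi_k)$ since $Q>0$, $K_I>0$, and $K_D\geq 0$. The key step is to show that $\Delta W \leq 0$ along the closed-loop trajectories.

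First I would compute $\Delta W(\tilde x_k,\tilde\xi_k) = \Delta H(\tilde x_k) + \Delta H_c(\tilde\xi_k,\tilde x_k)$ by adding the two identities already derived. From Proposition~\ref{pro2} we have $\tfrac{1}{\delta}\Delta H = -\tilde z_k^\top QRQ\tilde z_k + \tilde y_k^\top\tilde u_k$, and from Proposition~\ref{pro3} we have $\tfrac{1}{\delta}\Delta H_c = -\tilde y_k^\top K_P\tilde y_k - \tilde y_k^\top\tilde u_k$. The crucial cancellation is that the cross terms $\pm\tilde y_k^\top\tilde u_k$ annihilate each other upon summation, which is precisely the feedback interconnection structure (the converter sees $u_k$, the controller produces $-\tilde u_k$ against the same shifted output $\tilde y_k$). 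This leaves
\[
\frac{1}{\delta}\Delta W = -\tilde z_k^\top QRQ\tilde z_k - \tilde y_k^\top K_P\tilde y_k \leq 0,
\]
from which global stability (part {\bf (i)}) follows by the discrete-time Lyapunov stability theorem, using radial unboundedness of $W$. The equilibrium is $(x^\star,\xi^\star)$ with $\xi^\star=-K_I^{-1}u^\star$, consistent with both the CT computation in~\eqref{hc} and the DT equilibrium relation $u^\star=-K_I\xi^\star$ found in the proof of Proposition~\ref{pro3}.

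For the asymptotic claim (part {\bf (ii)}) I would rewrite the dissipation rate, exactly as in Proposition~\ref{pro1}, as
\[
\tilde z_k^\top QRQ\tilde z_k + \tilde y_k^\top K_P\tilde y_k = \tilde z_k^\top Q\,[R + g^\star K_P (g^\star)^\top]\,Q\tilde z_k,
\]
since $\tilde y_k = (g^\star)^\top Q\tilde z_k$ by~\eqref{y}. The damping injection assumption~\eqref{daminj} then gives $\tfrac{1}{\delta}\Delta W \leq -\alpha|Q\tilde z_k|^2 \leq 0$. Because $Q>0$ this forces $Q\tilde z_k \to 0$, hence $\tilde z_k\to 0$, i.e.\ $z_k\to x^\star$, which is the stated conclusion. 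The argument would invoke a discrete-time LaSalle/detectability argument, the DT counterpart of the classical detectability reasoning cited as \cite[Theorem A.2]{ORTetalbookpid}, to pass from summability of $\Delta W$ to convergence of the trajectory.

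**The main obstacle** I anticipate is not the Lyapunov inequality itself---which follows cleanly from the cross-term cancellation---but the rigorous passage from $\Delta W \leq -\alpha|Q\tilde z_k|^2$ to convergence of the actual state sequence in discrete time. In the CT case one invokes standard detectability/LaSalle arguments on continuous trajectories; in DT one must be careful that $z_k$ (a midpoint average) going to zero genuinely pins down $x_k\to x^\star$, and that the implicit definition of $z_k$ via $z_k = x_k + \tfrac{\delta}{2}F(z_k)$ remains well-posed (solvable) along the entire trajectory. I would need to confirm that boundedness of $(\tilde x_k,\tilde\xi_k)$ from part {\bf (i)}, together with the summability $\sum_k |Q\tilde z_k|^2 < \infty$ implied by telescoping $\Delta W$, yields $\tilde z_k\to 0$, and that the explicit form~\eqref{exep} of the update guarantees $\tilde x_k\to 0$ follows. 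This discrete-time detectability step is where the technical care concentrates.
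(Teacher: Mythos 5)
Your proposal is correct and follows essentially the same route as the paper: sum the two storage functions, cancel the cross terms $\pm\tilde y_k^\top\tilde u_k$ from \eqref{delh} and \eqref{delhc}, factor the dissipation as $\tilde z_k^\top Q[R+g^\star K_P(g^\star)^\top]Q\tilde z_k$, and invoke a discrete-time Lyapunov theorem (the paper cites \cite[Theorem 1.2]{BOFCARSCH}) for part \textbf{(i)} and strict negativity under \eqref{daminj} for part \textbf{(ii)}. Your closing caveat about passing from $\tilde z_k\to 0$ to $\tilde x_k\to 0$ is a reasonable concern, but note the proposition only claims $z_k\to x^\star$, which your argument already delivers.
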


\begin{proof}
	Consider the Lyapunov function candidate
	$$
	V(\tilde x_k,\tilde \xi_k) \coloneqq \frac{1}{\delta}H(\tilde x_k)  + \frac{1}{\delta}H_c(\tilde \xi_k,\tilde x_k),
	$$
	which is clearly radially unbounded and positive definite with respect to the equilibrium $(x_k,\xi_k)=(x^\star,\xi^\star )$. Its variation is computed adding up \eqref{delh} and \eqref{delhc} as follows
	\begali{
		\nonumber
		\Delta V(\tilde x_k,\tilde \xi_k) &
		= -\tilde z_k^\top QRQ\tilde z_k+  \tilde y_k^\top \tilde u_k - \tilde y_k^\top  K_P \tilde y_k - \tilde y_k^\top \tilde u_k\\
		& = - \tilde z_k^\top Q\left[ R +  g^\star K_P (g^\star)^\top\right]Q \tilde z_k \leq 0.
		\lab{delv}
	}
	From the last inequality above, invoking \cite[Theorem 1.2]{BOFCARSCH}, we conclude global stability of the equilibrium $(x^\star,\xi^\star )$.
	
	Moreover, if the damping injection assumption \eqref{daminj} holds, we have that $\Delta V(\tilde x_k,\tilde \xi_k) < 0$ for all $\tilde z_k \neq 0$, hence we deduce that,	
	 $\liminf \tilde z_k=0$, for any pair $(x_0,\xi_0)$, which implies $\liminf \frac{1}{2}(x_{k+1} + x_{k}) - x^\star = x_\infty - x^\star  = 0$, completing the proof of claim {\bf (ii)}.
\end{proof}

We further note that statement {\bf (ii)} of Proposition \eqref{pro4} implies that $\liminf y_k= y^\star$ as $\liminf z_k= x^\star$.

}
%
\section{Simulation and Experimental Results}
\label{sec6}
In this section, the Buck-Boost converter will be considered as an application example to assess the performance of the proposed controller \eqref{xiuk} via simulation and experimental studies. {Note that we only consider the case of its unidirectional power flow in our paper.}

\subsection{DT model of Buck-Boost converter and problem formulation}
The circuit topology of a Buck-Boost converter is shown in Fig. \ref{bbcircuit}.
\begin{figure}[t!]
	\centering
	\includegraphics[scale=1]{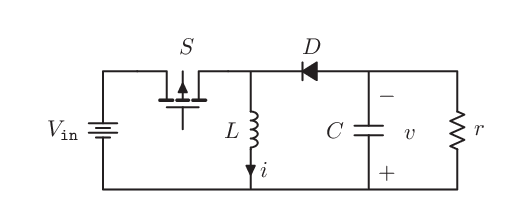}
	\caption{The circuit topology of buck-boost converter.}
	\label{bbcircuit}
\end{figure}
Its CT model can be expressed in the form \eqref{eqn:power_converter} with
\begin{align}
	&x=\left[
	\begin{array}{c}
		\phi \\
		q \\
	\end{array}
	\right], \
	J_0=\left[
	\begin{array}{cc}
		0 & -1 \\
		1 & 0 \\
	\end{array}
	\right], \
	J_1=\left[
	\begin{array}{cc}
		0 & 1 \\
		-1 & 0 \\
	\end{array}
	\right],\nonumber\\
	&R=\left[
	\begin{array}{cc}
		0 & 0 \\
		0 & 1/r \\
	\end{array}
	\right], \
	G_0=\left[
	\begin{array}{cc}
		0 & 0 \\
		0 & 0 \\
	\end{array}
	\right], \
	G_1=\left[
	\begin{array}{cc}
		1 & 0 \\
		0 & 0 \\
	\end{array}
	\right],\nonumber\\
	&E=\left[
	\begin{array}{c}
		V_{\tt in} \\
		0 \\
	\end{array}
	\right], \
	Q=\left[
	\begin{array}{cc}
		1/L & 0 \\
		0 & 1/C \\
	\end{array}
	\right],\nonumber
\end{align}
where $\phi$ is the flux in the inductor and $q$ is the charge in the capacitor, $V_{\tt in}$ is input voltage, $L$ and $C$ are the values of the inductor and
capacitor, and $r$ is the resistance load.

The relationship between the states and the signals indicated in Fig. \ref{bbcircuit} is given by
\begin{align}
	i=\frac{x_1}{L}, v=\frac{x_2}{C},\nonumber
\end{align}
where $i$ is inductor current and $v$ is output voltage. The converter dynamic model, in its compact form \eqref{eqn:power_converter_compact}, is expressed as
$$
\dot x = \underbrace{\begmat{-{1 \over C}x_2 \\ {1 \over L}x_1-{1 \over rC}x_2}}_{f(x)}+ \underbrace{\begmat{{V_{\tt in}}+{x_2 \over C}\\-{1 \over L}x_1 }}_{g(x)}u.
$$
Some simple calculations show that the assignable equilibrium set is given as
\begin{align}
	\lab{asseq}
	\mathcal{E}=\left\{x \in \mathbb{R}^2\ \Big| \ \frac{C^2V_{\tt in} x_1}{(x_2+V_{\tt in}C)}-\frac{Lx_2}{r}=0\right\}.
\end{align}
Consequently, for a given desired voltage reference $v^\star$, one obtains the reference
$$
x_1^\star=\frac{Lx_2^\star(x_2^\star+V_{\tt in}C)}{rC^2V_{\tt in}}.
$$
{The DT model of the Buck-Boost converter obtained via the midpoint discretisation method is given by
\begin{align}
x_{1, k+1}&=x_{1, k}+\delta\left(\frac{(u_{k}-1)z_{2, k}}{C}+V_{\tt in} u_k\right)\\
x_{2, k+1}&=x_{2, k}+\delta\left(\frac{(1-u_k)z_{1, k}}{L}-\frac{z_{2, k}}{rC}\right)
\end{align}
}

As stated in Section \ref{sec3}, the assignable equilibrium sets $\mathcal{E}$ of the CT and DT models coincide and is given by \eqref{asseq}.

Besides, we obtain that the passive output is defined as
$$
y_k=\Big(V_{\tt in}+\frac{x_2^\star}{C}\Big)\frac{z_{1, k}}{L}-\frac{x_1^\star z_{2, k}}{LC}
$$
and we get that $y^\star=\frac{V_{\tt in} x_1^\star}{L}$.
The control objective is to regulate the state $x_2$ around the equilibrium $x_2^\star$ using the proposed controller \eqref{xiuk}.

\begin{figure*}
	\centering
	\includegraphics[scale=0.7]{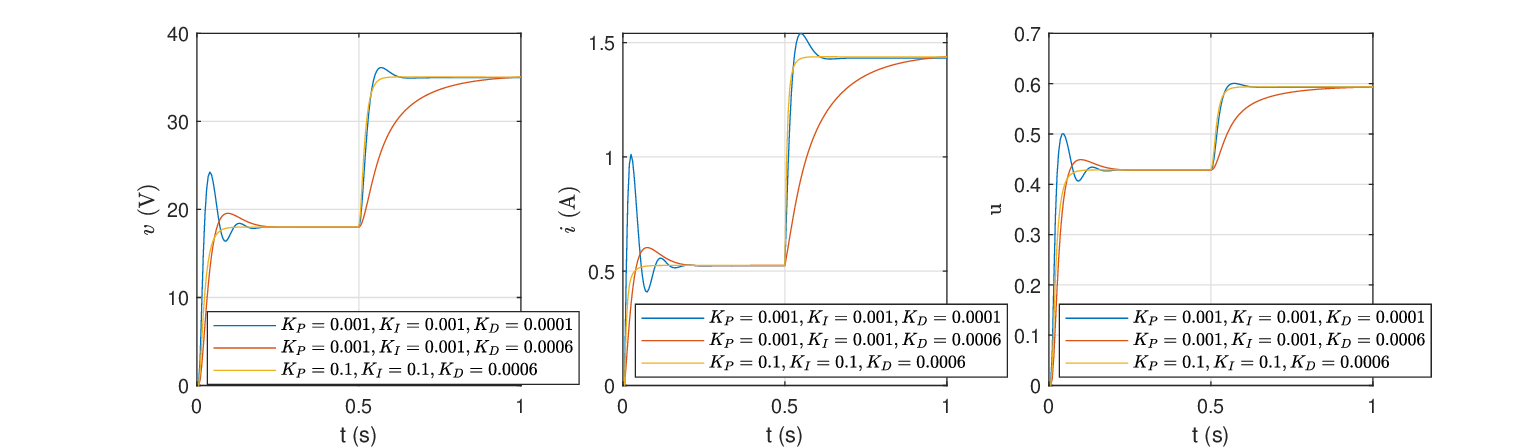}
	\caption{The response curves of the system with a step change in reference and different gains.}
	\label{fig1}
\end{figure*}
\begin{figure*}
	\centering
	\includegraphics[scale=0.7]{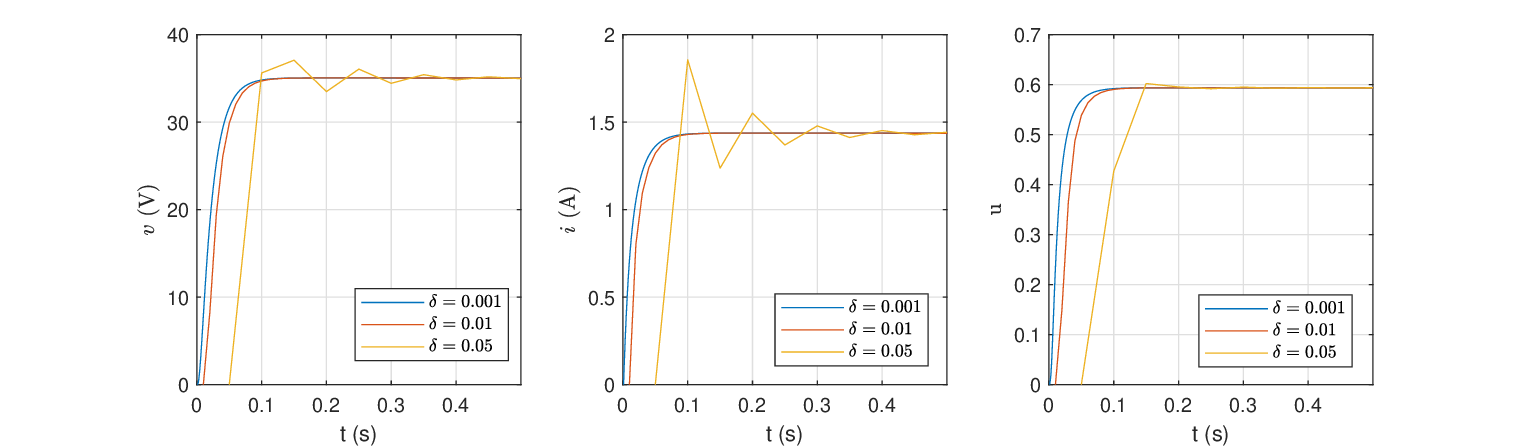}
	\caption{The response curves of the system with different sampling time $\delta$.}
	\label{fig2}
\end{figure*}
\begin{figure*}
	\centering
	\includegraphics[scale=0.7]{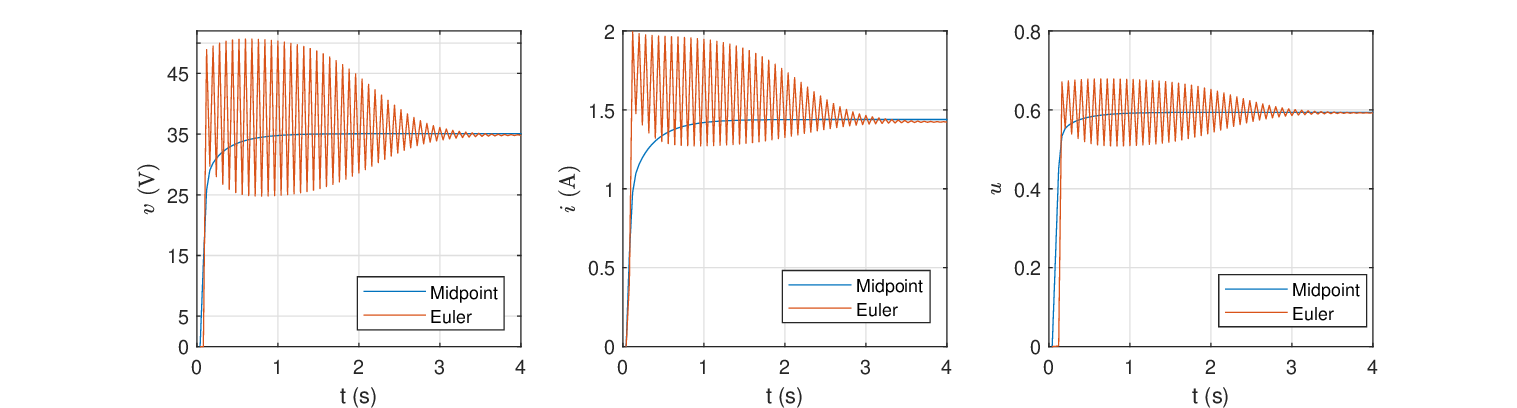}
	\caption{The response curves of the system under Euler discretization and the proposed midpoint method.}
	\label{fig3}
\end{figure*}

\subsection{Simulation results}
A simulation study of the Buck-Boost converter using its DT model is conducted to evaluate the performance of the DT PID-PBC in the ideal situation, where plant and controller are discretized.

The initial conditions are chosen as zero for all signals. The circuit parameters used in the simulation are provided in Table \ref{t1}.
\begin{table}[h!]
	\centering
	\begin{tabular}{lcc}
		\hline
		\hline
		Parameters & Symbols & Values\\
		\hline
		Input voltage & $V_{\tt in}$ & 24 V\\
		Inductance & $L$ & 1000 $\mu$H\\
		Capacitance & $C$ & 330 $\mu$F\\
        Resistance & $r$ & 60 $\Omega$\\
		\hline
		\hline
	\end{tabular}
	\caption{{Circuit parameters.}}
	\label{t1}
\end{table}

First, the tracking performance of the DT model in closed-loop with the proposed DT PID-PBC with different gains is validated. By fixing $\delta=5 \cdot 10^{-3}$, the response curves of the states and duty ratio are shown in Fig. \ref{fig1}, where the reference $v^\star$ is changed from $18$V to $35$V. It is observed that larger gains result in a smoother transient response and accelerate the convergence rate.

Second, we select different sampling times $\delta$ to show its effect on the closed-loop system. Here, $K_P=0.1, K_I=0.1, K_D=6\cdot 10^{-4}, v^\star=35$V are chosen.
The simulation results, presented in Fig. \ref{fig2}, show that a smaller sampling time $\delta$ leads to better transient performance. In contrast, a larger sampling time $\delta$ results in poorer transient and tracking performance, although stability is still maintained.

Finally, we compare the proposed control scheme with the controller discretized using the Euler method. Here, we used $K_P=10^{-4}, K_I=10^{-4}, K_D=10^{-3}, \delta=4\cdot 10^{-2}$ with voltage reference $v^\star= 35$V. As shown in Fig. \ref{fig3}, the transient performance of the Euler method is very poor. However, the trajectories converge to their desired values. It is important to note that choosing inappropriate control gains or setting $\delta$ too large---just slightly larger than $4\cdot 10^{-2}$---will lead to instability of the closed-loop system when using the Euler method. In contrast, the proposed symplectic-based digital controller ensures stability even for larger values of $\delta$.


\begin{figure}
	\centering
	\includegraphics[scale=0.6]{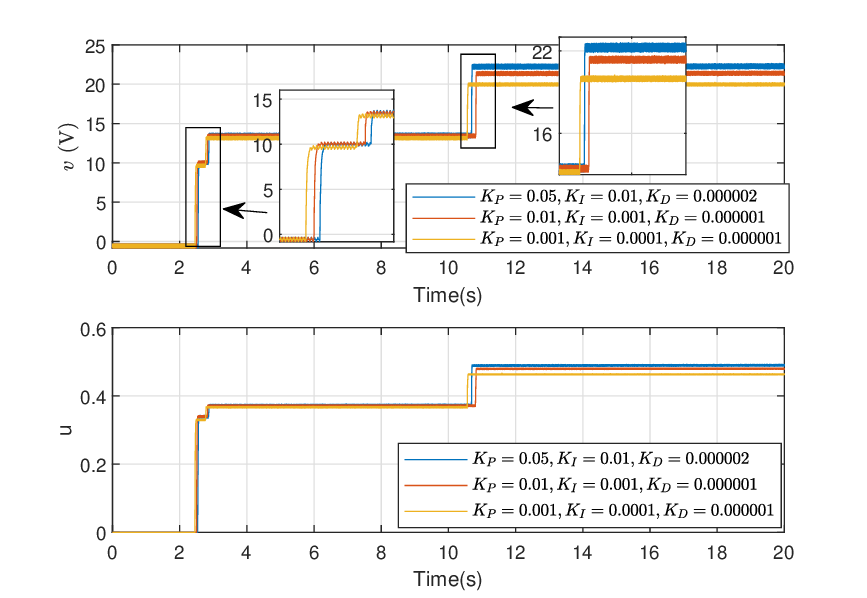}
	\caption{The experimental response curves of the system with step changes in reference from $15$ V to $22$ V.}
	\label{fig1-exp}
\end{figure}

\subsection{Experimental results}
An experimental study is conducted to further assess the control performance, using the same circuit parameters as those employed in the simulations and reported in Table~\ref{t1}. The control algorithm is first implemented in Matlab/Simulink, then compiled into a $C$ program and loaded onto the YXSPACE controller, which uses the TMS320F28335 central processing unit. The controller's input and output ports are used to regulate the output voltage of the converter to the desired value. Note that the Newton-Raphson method is used to solve implicit equations in the experimental study.

\begin{figure}
	\centering
	\includegraphics[scale=0.6]{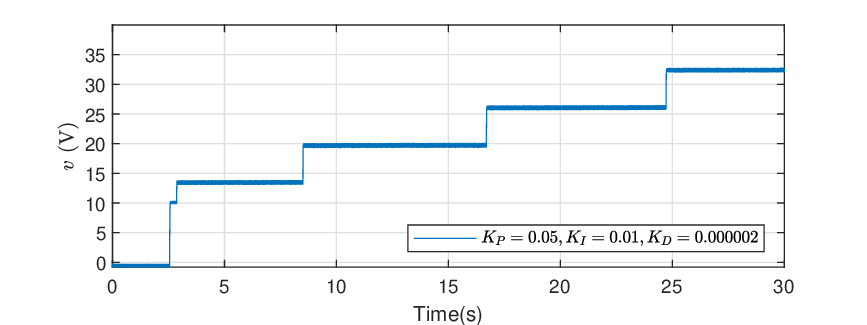}
	\caption{The experimental response curves of the system with step changes in reference from $15$ V to $20$ V to $25$ V to $30$ V.}
	\label{fig2-exp}
\end{figure}

We first evaluate the tracking performance by selecting different reference values, $v^\star$, ranging from $15V$ to $22V$. The experimental results are shown in Fig. \ref{fig1-exp}, with a sampling time $\delta = 5 \cdot 10^{-5}$ and varying controller gains. {The tentative selection of the gains is from the simulation study and their tuning objective is to obtain a nice performance. From the figure, it can be seen that the transient response is nearly identical for different gains. However, larger $K_P, K_I, K_D$ gains result in a small steady-state error, which is attributed to unmodeled dynamics.} Next, we test the system performance in both buck and boost modes by selecting additional reference values. The results, shown in Fig. \ref{fig2-exp}, demonstrate that the proposed method provides excellent tracking performance in both operational modes. We then assess the control performance under a step change in the reference voltage, transitioning from $15V$ to $22V$, for different sampling times. The controller gains used are $K_P = 10^{-3}$, $K_I = 10^{-5}$, and $K_D = 10^{-6}$. As shown in Fig. \ref{fig3-exp}, the system remains stable across all values of $\delta$, though the steady-state error becomes more evident as $\delta$ increases.

\begin{figure}
	\centering
	\includegraphics[scale=0.6]{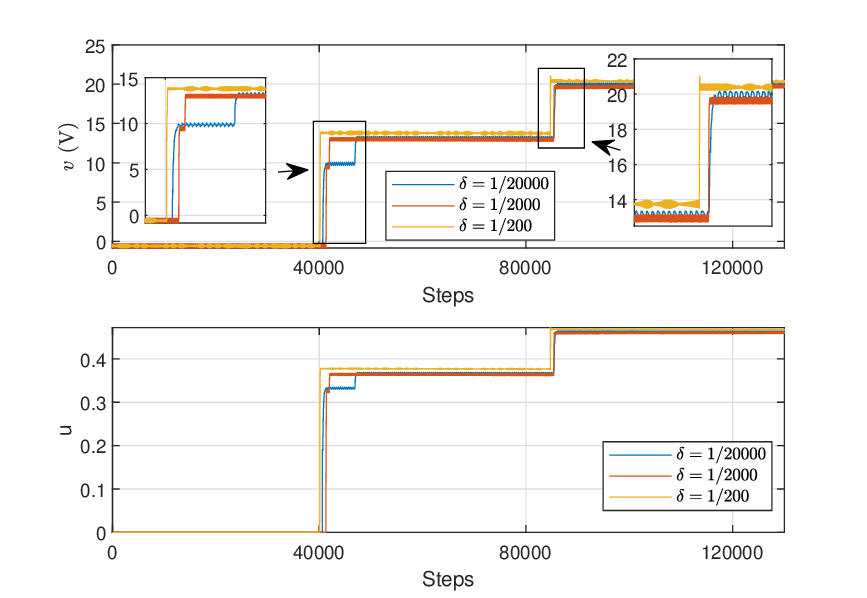}
	\caption{The experimental response curves of the system with different sampling time $\delta$.}
	\label{fig3-exp}
\end{figure}

%
%
\section{Concluding Remarks and Future Research}
\label{sec7}
%

We have addressed the problem of constructing digital PID Passivity-based Control (PID-PBC) to power converter models, addressing two critical issues that hinder its practical applicability. The first issue is that most practical implementations of PIDs are carried out in discrete time, which involves discretizing the continuous-time dynamical system of the PID. The second issue is the well-known problem that passivity is not preserved upon discretization, even with small sampling times.

To ensure the safe practical application of PID-PBC, we proposed a discretization of the PID that maintains its passivity and constructed outputs that are shifted passive, ensuring that the input-output maps of the incremental models remain passive. In particular, given that the reference value for the output to be regulated in power converters is non-zero, we focused on the property of passivity of the incremental model, known as shifted passivity. We demonstrated that the resulting discrete-time PID-PBC defines a passive map for the incremental model and established shifted passivity for the discretized power converter model. By combining these properties---which mirror those of the CT PID-PBC and the CT power converter model---we proved the main global stability result for the feedback interconnection of the power converter and the DT PID-PBC.

These results have been validated by means of simulations and experiments using a Buck-Boost converter with the control objective of driving a given output to a desired value, typically \emph{different} from zero. We have showed that the proposed method outperforms the typical emulation design, which is obtained simply performing the Euler discretization on both the {controller} and the plant.

The crux of our proposed method lies in redefining a passive output that preserves the shifted-passivity property after sampling, a common approach in passivity analysis under sampling. Future research will focus on developing the DT PID-PBC in conjunction with the recent $\varrho$-passivity theory~\cite{MORetal,MORetal2}, particularly by utilizing only the plant’s output measurements to construct a passive PID-PBC controller in a sampled-data fashion, without the need of discretizing the plant to be controlled.

{A second task that we plan to accomplish in the near future is to add a parameter estimator for some of the converter parameters, for instance, the load resistance $r$. It is clear that, the definition of the desired value of the states depends on these (usually) uncertain parameters. This task is carried out for the continuous-time PID-PBC in \cite{BOBetal}.}

\section*{Acknowledgments}
The second author would like to thank Mohammad Masoud Namazi's discussion about the experiment codes.

\end{document}